\let\proof\@undefined
\let\endproof\@undefined
\newcommand{\cX}{{\mathcal X}}
\newcommand{\cC}{{\mathcal C}}
\newcommand{\cT}{{\mathcal T}}
\newcommand{\contains}{{\rightarrow_{\cC}}}
\newcommand{\notcontains}{{\not \rightarrow}_{\cC}}
\newtheorem{observation}{Observation}
\begin{document}
\title{A note on efficient computation of hybridization number via softwired clusters}
\author{Steven Kelk}%, Celine Scornavacca\inst{2}}
\institute{Department of Knowledge Engineering (DKE), Maastricht University\\ 
P.O. Box 616, 6200 MD Maastricht, The Netherlands. steven.kelk@maastrichtuniversity.nl}
% \\ \and 
%Center for Bioinformatics (ZBIT),
%T\"ubingen University, 
%Sand 14,\\ 72076 T\"ubingen, Germany. scornava@informatik.uni-tuebingen.de\\ }
\maketitle

\begin{abstract}
Here we present a new fixed parameter tractable algorithm to compute the
hybridization number $r$ of two rooted binary phylogenetic trees on taxon set
$\cX$ in time $(6r)^r \cdot poly(n)$, where $n=|\cX|$. The novelty of this
approach is that it avoids the use of Maximum Acyclic Agreement Forests (MAAFs) and
instead exploits the equivalence of the problem with a related problem from
the \emph{softwired clusters} literature. This offers an alternative perspective on the
underlying combinatorial structure of the hybridization number problem.
\end{abstract}

\section{Introduction}

For notation and background we refer the reader to \cite{elusiveness,fptclusters}. Let $\cT = \{ T_1, T_2 \}$ be a set of two rooted binary phylogenetic trees on $\cX$, where $|\cX|=n$. We may assume without loss of generality that $T_1$ and $T_2$ do not have any non-trivial common subtrees, so $|\cX| \geq 3$. Let $\cC = Cl(\cT) = Cl(T_1) \cup Cl(T_2)$ be the set of clusters obtained from $\cT$. Clearly, every taxon in $\cX$ appears in at least one cluster in $\cC$, and
$|\cC| \leq 4(n-1)$, because a binary tree on $n$ taxa contains exactly $2(n-1)$ edges.

Now, we know from \cite{twotrees} that $r(\cC) = h(\cT)$ where $h(\cT)$ is the
hybridization number of the two trees $T_1$ and $T_2$. That is, the minimum number of reticulations required to display the two trees, $h(\cT)$, is equal to the
minimum number of reticulations required to represent the union of the clusters obtained from the two trees, $r(\cC)$. Hence we can concentrate on
computing $r(\cC)$. In \cite{elusiveness,twotrees} it is also proven that there exists a \emph{binary} network $N$ that represents $\cC$ such that $r(\cC)=r(N)$,
so we can further restrict our attention to binary networks.

 In \cite{fptclusters} an algorithm with running time $f( r(\cC) ) \cdot poly(n)$ is given to compute $r(\cC)$ for an \emph{arbitrary} set of clusters,
where $f( r(\cC) )$ is a function of $r(\cC)$ that does not depend on $n$ and $poly(n)$ is a function of the form $n^{O(1)}$. However, the running time of the algorithm in \cite{fptclusters} is purely theoretical. In the case of clusters obtained from two binary trees the running time can be more heavily optimized, which is the motivation for this note.

\section{A fixed parameter tractable algorithm for computing the hybridization number of two binary trees}

Given a cluster set $\cC$ and $x,y \in \cX$, we write $x \rightarrow_{\cC}  y$ if and only if every non-singleton cluster in $\cC$ containing $x$, also contains $y$\footnote{Note that, if a taxon $x$ appears
in only one cluster, $\{x\}$, then (vacuously) $x \contains y$ for all $y \neq x$.}. We say that a taxon $x \in \cX$ is a \emph{terminal} if there does not exist $x' \in \cX$ such that $x \neq x'$ and
$x \rightarrow_{\cC} x'$.  In \cite{elusiveness} it is proven that $\emptyset \subset \cX' \subset \cX$ is an ST-set of $\cC$ if and only if $\cX' \in Cl(T_1) \cap Cl(T_2)$ and the two subtrees induced by $\cX'$ in $T_1$ and $T_2$ are identical. Given that $T_1$ and $T_2$ are assumed to have no non-trivial common subtrees it follows that
$\cC$ has no non-singleton ST-sets, a property we call ST-collapsed \cite{fptclusters}. 

\begin{observation}
\label{obs:poset}
Let $\cC$ be an ST-collapsed set of clusters on $\cX$ such that $r(\cC) \geq 1$. Then the relation $\contains$ is a partial order on $\cX$, the terminals are the maximal
elements of the partial order and there is at least one terminal.
\end{observation}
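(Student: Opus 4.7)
My plan is to verify the three partial order axioms in turn, then identify terminals with the maximal elements of the resulting poset, and finally use finiteness of $\cX$ to conclude existence of at least one terminal. Reflexivity of $\contains$ is immediate, since any non-singleton cluster containing $x$ trivially contains $x$. Transitivity is a short chase: given $x \contains y$ and $y \contains z$ and a non-singleton cluster $C$ with $x \in C$, the first relation puts $y$ in $C$, and then (as $C$ remains non-singleton) the second puts $z$ in $C$ as well.

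The main obstacle will be antisymmetry, which is exactly where the ST-collapsed hypothesis must earn its keep. I plan to argue by contradiction: suppose $x \neq y$ satisfies $x \contains y$ and $y \contains x$, and show that $\{x,y\}$ is then a non-singleton ST-set of $\cC$. The case analysis should be compact: every non-singleton cluster of $\cC$ either contains $x$ (hence also $y$, so contains $\{x,y\}$), contains $y$ (symmetrically), or contains neither (hence is disjoint from $\{x,y\}$); every singleton is either $\{x\}$, $\{y\}$, or disjoint from $\{x,y\}$. Since $|\cX| \geq 3$, the set $\{x,y\}$ is a non-singleton proper subset of $\cX$, contradicting the ST-collapsed assumption. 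A brief check handles the corner case in which $x$ or $y$ appears only in its own singleton and so $\contains$ is vacuously satisfied; one verifies that the partner must also appear only as a singleton, after which the same ST-set argument goes through.

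With $\contains$ now a partial order on $\cX$, the claim that terminals coincide with the maximal elements is a definitional unpacking: $x$ is a terminal exactly when no distinct $x'$ satisfies $x \contains x'$, which is precisely maximality. Existence of a terminal then follows because $\cX$ is finite and nonempty, so at least one maximal element exists. The hypothesis $r(\cC) \geq 1$ plays a background role here, ruling out the degenerate case in which $\cC$ contains no non-singleton clusters at all — in that case $\contains$ would collapse to the all-relation on $\cX$ and antisymmetry would fail trivially, so the assumption is what guarantees the partial order has any real content.
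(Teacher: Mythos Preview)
Your proposal is correct and follows essentially the same route as the paper: reflexivity and transitivity are immediate, antisymmetry is obtained by showing that $x\contains y$ and $y\contains x$ force every cluster to be compatible with $\{x,y\}$ and $\cC|\{x,y\}$ to be compatible, so $\{x,y\}$ is a non-singleton ST-set; terminals are then the maximal elements by definition, and finiteness of $\cX$ gives at least one. Your treatment is slightly more explicit than the paper's (you spell out the $|\cX|\geq 3$ requirement, which the paper leaves implicit), though the separate ``corner case'' for taxa appearing only in their own singleton is already absorbed by your main argument and need not be singled out.
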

\begin{proof}
The relation $\contains$ is clearly reflexive and transitive. To see that it is anti-symmetric, suppose there exist two elements $x \neq y \in \cX$ such that $x \contains y$
and $y \contains x$. Then we have that, for every non-singleton cluster $C \in \cC$, $C \cap \{x,y\}$ is either equal to $\emptyset$ or $\{x,y\}$ i.e. $C$ is compatible
with $\{x,y\}$. Furthermore, the only clusters that can possibly be in $\cC | \{x,y\}$ are $\{x\}, \{y\}$ and $\{x,y\}$ and these are all mutually compatible. So
 $\{x,y\}$ is an ST-set, contradicting the fact that $\cC$ is ST-collapsed. Hence $\contains$ is a partial order. The fact that the terminals are the maximal elements
of the partial order then follows immediately from their definition. Finally, it is well known that every partial order on a finite set of elements contains at least one maximal element (because otherwise a cycle exists which contradicts the anti-symmetry property).
\end{proof}

%\begin{observation}
%\label{obs:commonsubtree}
%Let $\cC = Cl(\cT)$ be a set of clusters on $\cX$, where $\cT = \{T_1,T_2\}$ is a set of two binary trees on $\cX$ with no
%non-trivial common subtrees, and $r(\cC) \geq 1$. Then $\cC$ is ST-collapsed.
%\end{observation}
%\begin{proof}
%This is proven in \emph{Elusiveness}.
%\end{proof}

Let $T$ be a phylogenetic tree on $\cX$. For a vertex $u$ of $T$ we define $\cX(u) \subseteq \cX$ to be the set of all taxa that can be reached from $u$ by directed
paths. For a taxon $x \in \cX$ we define $W^{T}(x)$, the \emph{witness set} for $x$ in $T$, as $\cX(u) \setminus \{x\}$, where $u$ is the parent of $x$. A critical
property of $W^{T}(x)$ is that, for any non-singleton cluster $C \in Cl(T)$ that contains $x$, $W^{T}(x) \subseteq C$ \cite{elusiveness}.

\begin{observation}
\label{obs:termequiv}
Let $\cC = Cl(\cT)$ be a set of clusters on $\cX$, where $\cT = \{T_1,T_2\}$ is a set of two binary trees on $\cX$ with no
non-trivial common subtrees, and $r(\cC) \geq 1$. Then for any $x \in \cX$ the following statements are equivalent:
(1) $x$ is a terminal of $\cC$;
(2) there exist incompatible clusters $C_1, C_2 \in \cC$ such that $C_1 \cap C_2 = \{x\}$;
(3) $W^{T_1}(x) \cap W^{T_2}(x) = \emptyset$.
\end{observation}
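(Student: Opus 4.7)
The plan is to argue a cyclic chain of implications $(1) \Rightarrow (3) \Rightarrow (2) \Rightarrow (1)$, leveraging the witness-set property recalled just before the statement: any non-singleton cluster $C \in Cl(T_i)$ that contains $x$ must satisfy $W^{T_i}(x) \subseteq C$.

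For $(1) \Rightarrow (3)$, I would argue by contrapositive. Suppose some $y$ lies in $W^{T_1}(x) \cap W^{T_2}(x)$; by definition of witness set $y \neq x$. For any non-singleton cluster $C \in \cC$ containing $x$, we have $C \in Cl(T_i)$ for some $i \in \{1,2\}$, and the witness-set property forces $y \in W^{T_i}(x) \subseteq C$. Hence every non-singleton cluster containing $x$ also contains $y$, so $x \contains y$, contradicting the terminality of $x$.

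For $(3) \Rightarrow (2)$, I set $C_i := W^{T_i}(x) \cup \{x\}$, which is precisely the cluster at the parent of $x$ in $T_i$ and therefore belongs to $Cl(T_i) \subseteq \cC$. Since both trees are binary and $|\cX| \geq 3$, both $W^{T_1}(x)$ and $W^{T_2}(x)$ are nonempty, so each $C_i$ is non-singleton. The hypothesis $W^{T_1}(x) \cap W^{T_2}(x) = \emptyset$ then yields $C_1 \cap C_2 = \{x\}$, $C_1 \setminus C_2 = W^{T_1}(x) \neq \emptyset$, and symmetrically $C_2 \setminus C_1 \neq \emptyset$, so $C_1$ and $C_2$ are incompatible.

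For $(2) \Rightarrow (1)$, given incompatible $C_1, C_2 \in \cC$ with $C_1 \cap C_2 = \{x\}$, both clusters must be non-singleton (otherwise incompatibility fails). For any $x' \in \cX \setminus \{x\}$, $x'$ cannot lie in both $C_1$ and $C_2$, so at least one $C_i$ is a non-singleton cluster containing $x$ but not $x'$; hence $x \notcontains x'$ for all $x' \neq x$, and $x$ is a terminal. The main subtlety sits in $(3) \Rightarrow (2)$, where one must check that neither $W^{T_i}(x)$ is empty; this uses that the trees are binary with $|\cX| \geq 3$, so the sibling subtree of any leaf is nonempty. The other two implications reduce essentially to unwinding the definition of terminal together with the witness-set containment property, and should be routine.
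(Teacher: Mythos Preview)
Your proof follows the same cyclic implication scheme as the paper and is essentially correct, but there is one subtlety in $(3) \Rightarrow (2)$ that you overlook and the paper addresses explicitly: the set $W^{T_i}(x) \cup \{x\}$ belongs to $Cl(T_i)$ only if the parent of $x$ in $T_i$ is \emph{not} the root, since clusters correspond to edges and the root has no incoming edge (so $\cX \notin Cl(T_i)$ under the paper's conventions). Your sentence ``which is precisely the cluster at the parent of $x$ in $T_i$ and therefore belongs to $Cl(T_i)$'' silently assumes this. Under (3) it is easy to rule out: if the parent of $x$ in (say) $T_1$ were the root then $W^{T_1}(x) = \cX \setminus \{x\}$, and since $W^{T_2}(x)$ is a nonempty subset of $\cX \setminus \{x\}$ the two witness sets would intersect, contradicting (3). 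With this small addition your argument is complete and matches the paper's proof; the remark you flag about $W^{T_i}(x) \neq \emptyset$ is a separate (and also necessary) point, not a substitute for this one.
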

\begin{proof}
We first prove that (2) implies (1). For $x' \not \in C_1 \cup C_2$ it holds that $x \notcontains x'$, because
$x \in C_1$ but $x' \not \in C_ 1$. For $x' \in C_1 \setminus C_2$ it cannot hold that $x \contains x'$, because
$x \in C_2$ but $x' \not \in C_2$, and this holds symmetrically for $x' \in C_2 \setminus C_1$. Hence $x$ is
a terminal. We now show that (1) implies (3).
%Observe that $x$ must be in at least one non-singleton cluster
%of $\cC$, because otherwise (vacuously) $x \contains x'$ for all $x' \in \cX$ where $x' \neq x$, and hence
%$x$ could not be a terminal.
%So in at least one of $T_1$ and $T_2$ the parent of $x$ is not the root.
Suppose (3) does not hold. Then there exists some taxon $x' \in W^{T_1}(x) \cap W^{T_2}(x)$. So
every non-singleton cluster in $\cC$ that contains $x$ also contains $x'$, irrespective of whether the cluster came from $T_1$ or $T_2$.
But then $x \contains x'$, so (1) does not hold. Hence (1) implies (3). Finally, we show that (3) implies (2). Note that (3) implies
that in both $T_1$ and $T_2$ the parent of $x$ is \emph{not} the root. If this was not so, then (wlog) $W^{T_1}(x) = \cX \setminus \{x\}$,
and combining this with the fact that $W^{T_1}(x), W^{T_2}(x) \neq \emptyset$ would contradict (3). 
Hence $W^{T_1}(x) \cup \{x\} \in Cl(T_1)$ and $W^{T_2}(x) \cup \{x\} \in Cl(T_2)$, from which (2) follows.
\end{proof}

%\begin{observation}
%\label{obs:atleast1}
%Let $\cC = Cl(\cT)$ be a set of clusters on $\cX$, where $\cT = \{T_1,T_2\}$ is a set of two binary trees on $\cX$ with no
%non-trivial common subtrees, and $r(\cC) \geq 1$. Then $\cC$ contains at least one terminal.
%\end{observation}
%\begin{proof}
%Every binary phylogenetic tree with at least 3 taxa contains a size-2 cluster. So let $\{a,b\}$ be an arbitrary size-2 cluster in $Cl(T_1)$. There must
%be some cluster in $Cl(T_2)$ that is incompatible with $\{a,b\}$, because otherwise $\{a,b\}$ is a common cherry in $T_1$ and
%$T_2$, contradicting the assumption of no non-trivial common subtrees. So by characterisation (2) of Observation \ref{obs:termequiv}
%at least one of $a$ and $b$ is a terminal.
%\end{proof}

\begin{observation}
\label{obs:containswitness}
Let $\cC = Cl(\cT)$ be a set of clusters on $\cX$, where $\cT = \{T_1,T_2\}$ is a set of two binary trees on $\cX$ with no
non-trivial common subtrees, and $r(\cC) \geq 1$. Then, for any two taxa $x \neq y \in \cX$, $x \contains y$ if and only
if $y \in W^{T_1}(x) \cap W^{T_2}(x)$.
\end{observation}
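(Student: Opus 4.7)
\medskip

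\noindent\textbf{Proof proposal.} The plan is to prove the two directions separately, leveraging the witness-set property recalled just before Observation~\ref{obs:termequiv}: for any non-singleton cluster $C \in Cl(T_i)$ containing $x$, we have $W^{T_i}(x) \subseteq C$.

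For the ``if'' direction, assume $y \in W^{T_1}(x) \cap W^{T_2}(x)$. Take any non-singleton cluster $C \in \cC$ with $x \in C$. Since $\cC = Cl(T_1) \cup Cl(T_2)$, the cluster $C$ comes from some $T_i$, and then the witness-set property gives $W^{T_i}(x) \subseteq C$, hence $y \in C$. Since $C$ was arbitrary this shows $x \contains y$. This step should be essentially one line.

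For the ``only if'' direction, suppose $x \contains y$. I would exhibit, for each $i \in \{1,2\}$, a non-singleton cluster in $Cl(T_i)$ that contains $x$ and whose intersection with $\cX \setminus \{x\}$ equals $W^{T_i}(x)$, forcing $y$ into $W^{T_i}(x)$. The natural candidate is $\cX(u_i) = W^{T_i}(x) \cup \{x\}$, where $u_i$ is the parent of $x$ in $T_i$. As long as $u_i$ is not the root, this set is a cluster of $T_i$; it is non-singleton since in a binary tree the sibling of $x$ contributes at least one taxon to $W^{T_i}(x)$. Applying $x \contains y$ to this cluster immediately yields $y \in W^{T_i}(x)$.

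The one subtlety, and the step I expect to require a little care, is the degenerate case in which $u_i$ is the root of $T_i$, so $\cX(u_i) = \cX$ is not a proper cluster. In that case, however, $W^{T_i}(x) = \cX \setminus \{x\}$, so $y \in W^{T_i}(x)$ holds trivially from $y \neq x$. Thus the edge case is handled by inspection rather than by invoking the hypothesis $x \contains y$, and combining the two cases over $i=1,2$ yields $y \in W^{T_1}(x) \cap W^{T_2}(x)$, completing the proof.
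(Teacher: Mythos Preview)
Your proposal is correct and follows essentially the same approach as the paper: both arguments use the cluster $W^{T_i}(x)\cup\{x\}$ for the ``only if'' direction and handle the root case by noting $W^{T_i}(x)=\cX\setminus\{x\}$, while the ``if'' direction is the same one-line application of the witness-set property. The only cosmetic difference is that the paper phrases the ``only if'' direction as a contrapositive whereas you argue directly.
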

\begin{proof}
Suppose $x \contains y$, but (wlog) $y \not \in W^{T_1}(x)$. The parent of $x$ in $T_1$ cannot be the root, because
then $W^{T_1}(x) = \cX \setminus \{x\}$ which contains $y$. So there is an edge in $T_1$ whose head is the parent of
$x$. Let $C \in \cC$ be the cluster represented by this edge, then $C = \{x\} \cup W^{T_1}(x)$. $W^{T_1}(x)$ is non-empty and contains
neither $x$ nor $y$, so $C$ is a non-singleton cluster which contains $x$ but not $y$. So $x \notcontains y$.  
In the other direction, suppose $y \in W^{T_1}(x) \cap W^{T_2}(x)$. Let $C \in \cC$ be a non-singleton cluster that contains $x$. Every non-singleton cluster $C \in \cC$ is from $Cl(T_1)$ or $Cl(T_2)$, so $W^{T_1}(x) \subseteq C$ or $W^{T_2}(x) \subseteq C$. In any case it follows that $y \in C$.
\end{proof}

%\begin{proof}
%For each $C \in \cC$ it holds that either $C \in Cl(T_1)$ or $C \in Cl(T_2)$. For each non-singleton cluster $C \in $Cl(T_1)$ we
%have that, if $x \in C$, then $C \subseteq W^{T_1}(x)$. The same argument holds for $T_2$. So, for each non-singleton
%cluster $C \in \cC$ we have that, if $x$ is in $\cC$, then $y \in C$.
%\end{proof}

\begin{lemma}
\label{lem:term2term}
Let $\cC = Cl(\cT)$ be a set of clusters on $\cX$, where $\cT = \{T_1,T_2\}$ is a set of two binary trees on $\cX$ with no
non-trivial common subtrees, and $r(\cC) \geq 1$. Let $x$ be any taxon in $\cC$. If $x$ is not a terminal of $\cC$ then
there exists a terminal $y$ such that $x \contains y$.
\end{lemma}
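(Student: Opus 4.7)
The plan is to deduce this directly from Observation~\ref{obs:poset}, which provides the partial order structure on $\cX$ under $\contains$ with the terminals serving as the maximal elements. Since $x$ is not a terminal, $x$ is not maximal in this partial order, so there exists at least one $x' \in \cX$ with $x' \neq x$ and $x \contains x'$. The goal is to walk upward through the partial order until hitting a maximal element, and argue that this element is distinct from $x$.

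Concretely, I would consider the set $S = \{z \in \cX : x \contains z \text{ and } z \neq x\}$, which is non-empty by the above. Because $\cX$ is finite and $\contains$ restricted to $S$ inherits the partial order structure from Observation~\ref{obs:poset}, $S$ contains an element $y$ which is maximal in $S$ under $\contains$. I will then show $y$ is in fact a terminal of $\cC$ (i.e., maximal in the whole of $\cX$). Suppose not: then there is some $y' \neq y$ with $y \contains y'$. By transitivity $x \contains y'$, so provided $y' \neq x$ we have $y' \in S$, contradicting the maximality of $y$ in $S$. To rule out $y' = x$, observe that $y \in S$ means $x \contains y$; if also $y \contains x$, then anti-symmetry of $\contains$ (Observation~\ref{obs:poset}) forces $x = y$, contradicting $y \in S$.

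Thus $y$ is a terminal and, by construction, $x \contains y$, which is exactly the conclusion.

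The only mildly delicate point is keeping track of the side condition $z \neq x$ when invoking maximality, since the lemma requires an \emph{honest} terminal (in particular not $x$ itself, which is not a terminal by assumption); once this bookkeeping is done, the argument is essentially the standard fact that every non-maximal element of a finite poset lies below some maximal element. I do not expect to need Observations~\ref{obs:termequiv} or~\ref{obs:containswitness} here, although they could be used to give an alternative, witness-set-based proof by picking $y \in W^{T_1}(x) \cap W^{T_2}(x)$ minimising $|W^{T_1}(y) \cap W^{T_2}(y)|$.
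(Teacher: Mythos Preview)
Your proposal is correct and follows essentially the same route as the paper: both invoke Observation~\ref{obs:poset} to conclude that $\contains$ is a partial order on $\cX$ with the terminals as maximal elements, and then use the elementary fact that in a finite poset every non-maximal element lies below some maximal element. The paper simply cites ``transitivity and anti-symmetry'' for this last step, whereas you spell out the chain-climbing argument in detail; the extra bookkeeping you do to exclude $y'=x$ is exactly the anti-symmetry application the paper alludes to.
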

\begin{proof}
$\cC$ is ST-collapsed because $T_1$ and $T_2$ contain no non-trivial common subtrees. Hence, by Observation
\ref{obs:poset}, we know that $\contains$ is a partial order on $\cX$ and the terminals, of which there is at least one, are the maximal elements
of the partial order. The result then follows immediately from the transitivity and anti-symmetry property of partial orders.
\end{proof}

\begin{lemma} 
\label{lem:SBRexists}
Let $\cC = Cl(\cT)$ be a set of clusters on $\cX$, where $\cT = \{T_1,T_2\}$ is a set of two binary trees on $\cX$ with no
non-trivial common subtrees, and $r(\cC) \geq 1$. Then there exists $x \in \cX$ such that $r( \cC \setminus \{x\} ) < r(\cC)$.
\end{lemma}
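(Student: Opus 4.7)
The plan is to identify a terminal $x$ of $\cC$ and show that removing this particular taxon strictly decreases the reticulation number.

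First, I would appeal to Observation \ref{obs:poset}: since $T_1$ and $T_2$ share no non-trivial common subtrees, $\cC$ is ST-collapsed, and combined with $r(\cC)\ge 1$ this guarantees that $\contains$ is a partial order on $\cX$ having at least one maximal element, which I fix and call $x$. By Observation \ref{obs:termequiv} there then exist incompatible clusters $C_1,C_2\in\cC$ with $C_1\cap C_2=\{x\}$, equivalently $W^{T_1}(x)\cap W^{T_2}(x)=\emptyset$.

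Next, I would fix a minimum binary network $N$ that represents $\cC$ (so $r(N)=r(\cC)$) and let $N'$ be obtained from $N$ by deleting the leaf $x$ together with its in-edge and then performing the standard clean-up: suppress any degree-two vertices and iteratively delete any reticulations that have become sinks. A routine check shows that $N'$ represents $\cC\setminus\{x\}$, because any edge of $N$ displaying a cluster $C\in\cC$ with $C\ne\{x\}$ descends (under suppression) to an edge of $N'$ displaying $C\setminus\{x\}$. Hence $r(\cC\setminus\{x\})\le r(N')$.

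The crux, and the main obstacle, is the bound $r(N')\le r(N)-1$: one has to show that the deletion really saves a reticulation. This reduces to arguing that in some minimum network for $\cC$ the parent of the chosen terminal $x$ is a reticulation, because then the clean-up removes that reticulation outright. The guiding intuition is that displaying two incompatible clusters $C_1,C_2$ whose only shared taxon is $x$ forces $x$ to be reached by two structurally different paths in $N$ under different reticulation switchings, which is only possible if $x$ sits underneath a reticulation. I would make this precise via local network surgery: starting from an arbitrary minimum $N$, use the disjointness $W^{T_1}(x)\cap W^{T_2}(x)=\emptyset$ from Observation \ref{obs:termequiv} to locate a reticulation on the path from the root to $x$ that mediates the two displays, and then rewire $N$ to make $x$ a direct child of that reticulation without changing either the set of represented clusters or the reticulation count. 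Once the parent of $x$ is a reticulation one has $r(N')=r(N)-1=r(\cC)-1$, so $r(\cC\setminus\{x\})<r(\cC)$ as required.
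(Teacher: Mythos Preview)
Your approach has a genuine gap: it is \emph{not} true that every terminal $x$ satisfies $r(\cC\setminus\{x\})<r(\cC)$, so fixing an arbitrary terminal at the outset cannot work. Concretely, take $T_1=((a,(b,c)),(d,e))$ and $T_2=((a,b),((c,d),e))$ on $\cX=\{a,b,c,d,e\}$; these have no common cherry, $r(\cC)=1$, and the terminals are $b,c,d$. Removing $b$ gives $T_1|_{\cX\setminus\{b\}}=((a,c),(d,e))$ and $T_2|_{\cX\setminus\{b\}}=(a,((c,d),e))$, whose cluster sets still contain the incompatible pair $\{a,c\},\{c,d\}$, so $r(\cC\setminus\{b\})=1=r(\cC)$. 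Thus $b$ is a terminal for which your conclusion fails. Your guiding intuition---that two incompatible clusters meeting only in $x$ force $x$ itself to sit below a reticulation---is exactly what breaks: in the unique minimum network for this example the parent of $b$ is a tree-node, and the incompatibility of $\{a,b\}$ and $\{b,c\}$ is mediated by the reticulation above $c$, not above $b$. The proposed ``local surgery'' to slide $x$ under a reticulation while preserving all clusters and the reticulation count therefore cannot be carried out in general.

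The paper's argument runs in the opposite direction and sidesteps this difficulty entirely. Rather than choosing a taxon first and then looking for a reticulation above it, it starts from an optimal binary network $N$, uses acyclicity to locate a \emph{Subtree Below a Reticulation} (an SBR), and only then reads off the taxon. Because $\cC$ is ST-collapsed the SBR is a single taxon $x$, and since its parent is already a reticulation by definition, deleting $x$ and tidying up immediately yields a network with one fewer reticulation that represents $\cC\setminus\{x\}$. In the example above the SBR is $c$, and indeed $r(\cC\setminus\{c\})=0$. The later (and much harder) Theorem~\ref{thm:6rA} is precisely about relating such SBR taxa back to terminals, and the fact that the answer there is ``a terminal \emph{or} a taxon in a size-2 cluster with a terminal'' confirms that terminals alone do not suffice.
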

\begin{proof}
Consider a binary network $N$ which represents $\cC$, where $r(N)=r(\cC)$.
By acyclicity $N$ contains at least one \emph{Subtree Below a Reticulation} (SBR) \cite{elusiveness},
i.e. a node $u$ with indegree-1 whose parent is a reticulation, and such that no reticulation can be reached by a directed
path from $u$.  Let $\cX'$ be the set of taxa reachable from $u$ by directed paths. $\cX'$ is an ST-set, so $|\cX'|=1$ (because
$\cC$ is ST-collapsed). Let $x$ be the single taxon in $\cX'$. Deleting $x$ and its reticulation parent from $N$ (and tidying up
the resulting network in the usual fashion) creates a network $N'$ on $\cX \setminus \{x\}$ with $r(N') < r(N)$ that represents $\cC \setminus \{x\}$.
\end{proof}

\begin{lemma} 
\label{lem:atmostone}
Let $\cC = Cl(\cT)$ be a set of clusters on $\cX$, where $\cT = \{T_1,T_2\}$ is a set of two binary trees on $\cX$ with no
non-trivial common subtrees, and $r(\cC) \geq 1$. Then for each $x \in \cX$ it holds that $r(\cC) -1 \leq r( \cC \setminus \{x\} ) \leq r(\cC)$.
\end{lemma}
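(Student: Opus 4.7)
The plan is to treat the two inequalities separately. The upper bound $r(\cC \setminus \{x\}) \leq r(\cC)$ is straightforward: starting from a binary network $N$ representing $\cC$ with $r(N) = r(\cC)$, I would delete the leaf $x$ and its incoming edge and perform the usual clean-up (suppressing any resulting in-degree-1 out-degree-1 vertex, and iteratively removing an orphaned reticulation if $x$'s parent was a reticulation). The resulting network $N''$ on $\cX \setminus \{x\}$ has $r(N'') \leq r(N)$, and for each $C' = C \setminus \{x\} \in \cC \setminus \{x\}$, restricting an embedded tree of $N$ displaying $C$ to $\cX \setminus \{x\}$ gives an embedded tree of $N''$ displaying $C'$.

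For the lower bound $r(\cC \setminus \{x\}) \geq r(\cC) - 1$, I would start from an optimal binary $N'$ representing $\cC \setminus \{x\}$ and build a network $N$ representing $\cC$ with $r(N) \leq r(N') + 1$. The construction adds a single new reticulation $r$ with $x$ as its unique child, whose two parents $p_1, p_2$ are obtained by subdividing two chosen edges of $N'$. The embedded trees of $N$ are then in bijection with pairs $(T', i)$, where $T'$ is an embedded tree of $N'$ and $i \in \{1,2\}$ tells which parent edge of $r$ survives; the corresponding tree of $N$ is $T'$ with $x$ hung off $p_i$.

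A natural first try is to place $p_i$ on the pendant edge of some leaf $y_i \in W^{T_i}(x)$, which is non-empty because the $T_j$ are binary. This handles every non-singleton cluster $C \ni x$: $C$ is in $Cl(T_i)$ for some $i$, so $W^{T_i}(x) \subseteq C$ gives $y_i \in C \setminus \{x\}$, and combining an embedding of $N'$ that displays $C \setminus \{x\}$ with the choice $(p_i,r)$ produces an embedded tree of $N$ whose corresponding edge-cluster becomes $(C \setminus \{x\}) \cup \{x\} = C$.

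The main obstacle is the clusters $C \in \cC$ with $x \notin C$: for these we need some $i$ with $y_i \notin C$, otherwise $C$ is inflated to $C \cup \{x\}$ under either choice. A short tree-structural argument (tracing ancestors in $T_i$) shows that a ``bad'' such $C$ arising from $T_i$ must satisfy $C \subseteq W^{T_i}(x)$; so picking $y_1 \in W^{T_1}(x) \setminus W^{T_2}(x)$ and $y_2 \in W^{T_2}(x) \setminus W^{T_1}(x)$ rules out bad clusters from both trees whenever the two set differences are non-empty. The remaining degenerate cases---one witness set contained in the other, in particular when $x$'s parent is the root of some $T_j$ so that $W^{T_j}(x) = \cX \setminus \{x\}$ and the cluster $\cX \setminus \{x\} \in \cC$ itself becomes a candidate bad cluster---I would handle by anchoring one of the $p_i$ on an edge outside $W^{T_j}(x)$ (a pendant edge to a leaf of $\cX \setminus (W^{T_j}(x) \cup \{x\})$ when available, otherwise the root edge of $N'$), and verifying directly that every cluster of $\cC$ is displayed.
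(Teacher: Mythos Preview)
Your argument is correct. The upper bound is exactly as in the paper, and for the lower bound you supply an explicit ``hang $x$ back in with one extra reticulation'' construction, whereas the paper simply cites \cite{elusiveness} for this step.

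It is worth noting that your construction is not the same as the one from \cite{elusiveness} that the paper later summarises for the DRHT operation. There, one reticulation edge is attached to the root and the other just above a single taxon $M(x)$ with $x \contains M(x)$; this is very clean because the root-anchored edge automatically takes care of every cluster not containing $x$, and the $M(x)$-anchored edge takes care of every cluster containing $x$ in one stroke. The price is that it only applies directly when $x$ is not a terminal (so that such an $M(x)$ exists). Your two-witness-set construction is symmetric in $T_1$ and $T_2$ and handles the terminal case $W^{T_1}(x)\cap W^{T_2}(x)=\emptyset$ immediately, since then both set differences are non-empty; conversely, your ``degenerate'' case $W^{T_i}(x)\subseteq W^{T_j}(x)$ is precisely the non-terminal case, and your fix (anchor $p_j$ on a pendant edge outside $W^{T_j}(x)$, or on the root edge if no such leaf exists, while keeping $p_i$ above some $y_i\in W^{T_i}(x)$) essentially collapses to the root-plus-$M(x)$ construction. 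One small point to make explicit when you write it out: in the degenerate case it is $p_j$ (the anchor associated with the \emph{larger} witness set) that must be relocated, since $y_i\in W^{T_i}(x)\subseteq W^{T_j}(x)$ already lies in every non-singleton cluster containing $x$ from either tree, so $p_i$ alone suffices for those and $p_j$ is free to serve as the ``escape'' anchor for clusters not containing $x$.
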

\begin{proof}
The second $\leq$ is immediate because removing a taxon from a cluster set cannot raise the reticulation number of the cluster set. The first $\leq$ holds
because in \cite{elusiveness} it is shown how, given \emph{any} network $N'$ on $\cX \setminus \{x\}$ that represents $\cC \setminus \{x\}$,
we can extend $N'$ to obtain a network $N$ on $\cX$ that represents $\cC$ such that $r(N) \leq r(N')+ 1$.
\end{proof}

Recall from \cite{elusiveness} the definition of an \emph{ST-set tree sequence} of a cluster set $\cC$. Let $(S_1, S_2, ..., S_p)$ be
an ST-set tree sequence of $\cC$ of minimum length, where $\cC = Cl(\cT) = Cl(T_1) \cup Cl(T_2)$. In \cite{elusiveness} it is proven that $p = r(\cC)$. 

\begin{lemma} 
\label{lem:cherries}
Let $\cC = Cl(\cT)$ be a set of clusters on $\cX$, where $\cT = \{T_1,T_2\}$ is a set of two binary trees on $\cX$ with no
non-trivial common subtrees, and $r(\cC) \geq 1$. Suppose there exist distinct taxa $a,b,c \in \cX$ such that
$\{a,b\}$ and $\{b,c\}$ are both clusters in $\cC$. Then there exists $x \in \{a,b,c\}$ such that
$r(\cC \setminus \{x\}) = r(\cC) - 1$.
\end{lemma}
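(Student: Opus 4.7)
The plan is to combine structural properties of the two cherries with the SBR-deletion argument of Lemma~\ref{lem:SBRexists}. Since $\{a,b\}$ and $\{b,c\}$ share only $b$ while neither contains the other, they are incompatible; but the clusters of any binary tree form a laminar family, so these two clusters cannot both come from the same input tree. Without loss of generality $\{a,b\}\in Cl(T_1)$ and $\{b,c\}\in Cl(T_2)$, and being size-$2$ clusters of binary trees they are cherries of $T_1$ and $T_2$ respectively. In particular $W^{T_1}(b)=\{a\}$ and $W^{T_2}(b)=\{c\}$ are disjoint, so by Observation~\ref{obs:termequiv} $b$ is a terminal of $\cC$.

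Next I would fix a binary network $N$ with $r(N)=r(\cC)$ and try to exhibit an SBR of $N$ whose unique below-taxon lies in $\{a,b,c\}$; applying the SBR-deletion step of Lemma~\ref{lem:SBRexists} to such an SBR produces a network representing $\cC\setminus\{x\}$ with one fewer reticulation, and Lemma~\ref{lem:atmostone} then forces $r(\cC\setminus\{x\})=r(\cC)-1$. I would case-split on the parent $p$ of $b$ in $N$. If $p$ is a reticulation, then the leaf $b$ itself is an SBR (trivially no reticulation is reachable from it), and $x=b$ works immediately. Otherwise $p$ is a tree vertex with second child $w$; since $p$ is the unique parent of $b$ in $N$, every softwired realisation of $\{a,b\}$ (respectively $\{b,c\}$) as a cluster of $N$ must arise at $p$, in a switching where, after suppressions, the leaves of the displayed tree reachable from the subtree rooted at $w$ equal $\{a\}$ (respectively $\{c\}$). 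This forces at least one reticulation in, or adjacent to, the subnetwork rooted at $w$, and makes both $a$ and $c$ reachable from $w$ in at least one switching each.

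The main obstacle is converting this local information into an SBR whose below-taxon is in $\{a,c\}$. The plan would be to pick a lowest reticulation $\rho$ relevant to this subnetwork (i.e.\ a reticulation reachable from $w$ from which no further reticulation can be reached); by the ST-collapsedness argument in the proof of Lemma~\ref{lem:SBRexists}, the subtree below $\rho$'s child contains a unique taxon $x^{\ast}$, which I would then pin down to $\{a,c\}$ using the two witnessing switchings above. The delicate step is ensuring that $\rho$ can be chosen so that $x^{\ast}$ really lies in $\{a,c\}$; in particular, one must rule out the possibility that the lowest reticulation below $w$ governs some taxon unrelated to the two cherries, and handle cases where $a$ or $c$ is reached in some switchings via paths that bypass $w$'s subnetwork entirely. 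Carrying out this bookkeeping carefully, using that both cherry clusters are simultaneously softwired at the same node $p$, is the technical heart of the argument.
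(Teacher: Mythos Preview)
Your approach is genuinely different from the paper's, and the gap you flag is real and, as written, not closed.

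The paper does \emph{not} work inside a fixed optimal network at all. It uses the equivalence $r(\cC)=p$ where $p$ is the minimum length of an ST-set tree sequence $(S_1,\dots,S_p)$ of $\cC$. Since $\{a,b\}$ and $\{b,c\}$ are incompatible, some $S_i$ must meet $\{a,b,c\}$; taking the first such $i$, compatibility of $S_i$ with both size-$2$ clusters forces $S_i$ to be a singleton $\{x\}\subseteq\{a,b,c\}$. Then $(S_1,\dots,S_{i-1},S_{i+1},\dots,S_p)$ is an ST-set tree sequence of $\cC\setminus\{x\}$ of length $p-1$, so $r(\cC\setminus\{x\})\leq r(\cC)-1$, and Lemma~\ref{lem:atmostone} gives equality. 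No network surgery, no SBR localisation.

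Your route fixes an arbitrary optimal binary $N$ and tries to exhibit an SBR whose taxon lies in $\{a,b,c\}$. The reticulation-parent case is fine. In the tree-parent case, however, two things go wrong. First, the claim that every realisation of $\{a,b\}$ must ``arise at $p$'' is too strong: the representing edge may have head a strict ancestor of $p$, in which case the taxa reachable from $w$ in that switching can be $\emptyset$ rather than $\{a\}$; so you do not automatically get switchings with exactly $\{a\}$ (resp.\ $\{c\}$) below $w$. Second, and more seriously, even granting that $a$ and $c$ are each reachable from $w$ in suitable switchings, nothing forces a \emph{lowest} reticulation below $w$ to have its unique taxon in $\{a,c\}$. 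The subnetwork below $w$ can contain other taxa and reticulations whose SBRs are unrelated to the two cherries; ``both cherry clusters are realised near $p$'' is a statement about two particular switchings, not about the static position of SBRs in $N$. Without a mechanism to modify $N$ (cf.\ the DRHT operations in Theorem~\ref{thm:6rA}), there is no reason an \emph{arbitrary} optimal $N$ has an SBR in $\{a,b,c\}$, so the plan as stated does not go through.

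Your preliminary observation that $b$ is a terminal (via $W^{T_1}(b)=\{a\}$, $W^{T_2}(b)=\{c\}$) is correct and pleasant, but the paper's argument does not need it for this lemma.
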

\begin{proof}
We know from \cite{elusiveness} that there exists an ST-set tree sequence $(S_1, ..., S_p)$ of $\cC$ where $p = r(\cC)$. It is clear that at least one
of $\{a,b,c\}$ has to occur in one of the $S_i$ sets, because otherwise the two clusters $\{a,b\}$ and $\{b,c\}$
survive even after $S_p$ has been removed, contradicting the fact that it is a \emph{tree} sequence. Now, let $1 \leq i \leq p$ be
the smallest $i$ such that $S_i \cap \{a,b,c\} \neq \emptyset$. Note that, at the point just before the ST-set $S_i$ is
removed, none of $\{a,b\}, \{b,c\}, \{a,c\}, \{a,b,c\}$ are ST-sets (because of the incompatible pair
of clusters $\{a,b\}$ and $\{b,c\}$). So $|S_i \cap \{a,b,c\}| < 3$. Furthermore, $|S_i \cap \{a,b,c\}| \neq 2$
because at least one of the two clusters $\{a,b\}$ and $\{b,c\}$ will be incompatible with $S_i$. So
$|S_i|=1$ and $S_i \subseteq \{a,b,c\}$. Let $x$ be the single taxon in $S_i$. This means that $(S_1, ..., S_{i-1}, S_{i+1}, ..., S_{p})$ is
an ST-set tree sequence of $\cC \setminus \{x\}$ of length $p-1$. We conclude that $r(\cC \setminus \{x\}) < r(\cC)$ and (by Lemma \ref{lem:atmostone})
we have that $r(\cC \setminus \{x\}) = r(\cC)-1$.
\end{proof}

\begin{theorem}
\label{thm:6rA}
Let $\cC = Cl(\cT)$ be a set of clusters on $\cX$, where $\cT = \{T_1,T_2\}$ is a set of two binary trees on $\cX$ with no
non-trivial common subtrees, and $r(\cC) \geq 1$. Then at least one of the following two conditions holds: (1) there exist distinct taxa $a,b,c \in \cX$
such that $\{a,b\}$ and $\{b,c\}$ are both clusters in $\cC$; (2) there exists a taxon $x \in \cX$ such that
$r(\cC \setminus \{x\}) = r(\cC) - 1$ and $x$ is either a terminal of $\cC$ or is in some size-2 cluster with a terminal of $\cC$.
\end{theorem}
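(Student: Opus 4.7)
The plan is to prove the contrapositive: assume that (1) fails, so that no two size-$2$ clusters of $\cC$ share a taxon, and derive (2). I first apply Lemma~\ref{lem:SBRexists} together with Lemma~\ref{lem:atmostone} to obtain a taxon $z \in \cX$ with $r(\cC \setminus \{z\}) = r(\cC) - 1$. If $z$ is itself a terminal of $\cC$, condition (2) is immediate with this $z$. Otherwise Lemma~\ref{lem:term2term} produces a terminal $y$ with $z \contains y$, and Observation~\ref{obs:containswitness} places $y$ in $W^{T_1}(z) \cap W^{T_2}(z)$.

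The key intermediate claim is that, under $\neg(1)$, $\{z, y\}$ must be a cluster of $\cC$. The easy direction is when one of the witness sets of $z$ equals $\{y\}$: if $W^{T_i}(z) = \{y\}$ for some $i \in \{1, 2\}$, then $(z, y)$ is a cherry in $T_i$, so $\{z, y\} \in Cl(T_i) \subseteq \cC$; this is a size-$2$ cluster containing $z$ together with the terminal $y$, and (2) holds with $x := z$.

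The main obstacle is the remaining case $|W^{T_1}(z)|, |W^{T_2}(z)| \geq 2$. My plan here is to derive (1) and thus obtain a contradiction. Each sibling subtree $T_i^{s_i^z}$ (where $s_i^z$ denotes the sibling of $z$ in $T_i$) is a binary tree on at least two leaves containing $y$, and so admits at least one cherry, which yields a size-$2$ cluster in $\cC$. I would then case-analyse the cherry structure at the terminal $y$. If $y$ lies in cherries $(y, y_1)$ of $T_1$ and $(y, y_2)$ of $T_2$, then $y_1 \neq y_2$ (else $T_1$ and $T_2$ would share the non-trivial subtree $(y, y_1)$, which is forbidden), so $\{y, y_1\}$ and $\{y, y_2\}$ are distinct size-$2$ clusters of $\cC$ sharing $y$ and (1) follows directly. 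The harder sub-cases are when $y$ lies in a cherry of exactly one tree, or in a cherry of neither; here the terminality condition $W^{T_1}(y) \cap W^{T_2}(y) = \emptyset$ from Observation~\ref{obs:termequiv}, combined with the failure of (1), must be used to propagate constraints through both trees and locate two overlapping size-$2$ clusters, possibly after descending into the non-trivial sibling subtrees of $y$ and re-applying the cherry-selection argument at a deeper leaf. I expect this descent/propagation step to be the most delicate part of the argument, since it is what reduces the fully generic configuration to the clean situation in which two size-$2$ clusters of $\cC$ visibly overlap.
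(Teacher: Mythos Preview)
Your overall strategy is to fix the single taxon $z$ produced by Lemma~\ref{lem:SBRexists} and argue that this particular $z$ already witnesses condition~(2) whenever (1) fails. The gap is in the hard case. Your plan there is to derive~(1) purely from cherry and witness-set structure in $T_1,T_2$, without using the property $r(\cC\setminus\{z\})=r(\cC)-1$. That cannot succeed: take $T_1=((((1,2),3),4),5)$ and $T_2=((((3,4),5),1),2)$. These have no common cherry, the only size-$2$ clusters are $\{1,2\}$ and $\{3,4\}$ so (1) fails, and with $z=5$ one has $W^{T_1}(5)=\{1,2,3,4\}$, $W^{T_2}(5)=\{3,4\}$, both of size $\geq 2$, and $5\contains 3$ with $3$ terminal. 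Your descent into sibling subtrees finds only the disjoint cherries $(1,2)$ and $(3,4)$ and never produces two overlapping size-$2$ clusters. So the tree-combinatorial claim you are trying to prove in the hard case is simply false.

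In this example one can check $r(\cC)=2$ while $r(\cC\setminus\{5\})=2$, so $z=5$ would not in fact be returned by Lemma~\ref{lem:SBRexists}; every valid $z\in\{1,2,3,4\}$ lands in your easy cases. But that is precisely the point: the only thing separating $z=5$ from a legitimate choice is the reticulation-number drop, and your hard-case reasoning never invokes it. Any correct completion of your outline must exploit $r(\cC\setminus\{z\})=r(\cC)-1$ in an essential way, and it is not at all clear how to do so with the cherry-propagation mechanism you describe.

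The paper's proof is structurally quite different. Rather than committing to an arbitrary SBR, it takes an optimal binary network and repeatedly applies a \emph{detach and re-hang above a terminal} operation: each non-terminal SBR is removed and re-attached beneath a fixed terminal $M(x)$ with $x\contains M(x)$, preserving both $r(\cC)$ and representation of $\cC$. Iterating yields a canonical network $N^{*}$ in which every SBR has one reticulation edge from the root and the other from just above its associated terminal. A case analysis on $N^{*}$ (walking back from such a terminal until hitting a reticulation, a tree-node seeing a second taxon, or the root) then locates the correct witness $x$ for~(2)---which may be a terminal $M(x)$ itself, or an SBR that is forced to be a sibling of $M(x)$ in one of the input trees. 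The transformation step is exactly what lets the proof \emph{choose} the right SBR rather than being stuck with an arbitrary one.
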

\begin{proof}
% By Lemma \ref{lem:atmostone} it is sufficient to prove that $r(\cC \setminus \{ x \}) < r(\cC)$.
Towards a counter-example, assume that the claim does not hold for $\cC$. For each $x \in \cX$ that is not a terminal,
let $M(x)$ be an arbitrary terminal such that $x \contains M(x)$, which must exist by Lemma \ref{lem:term2term}. The mapping
$M$ will remain unchanged for the rest of the proof.

Let $N$ be an arbitrary binary network that represents $\cC$ such that $r(N)=r(\cC)$. Due to the fact that $\cC$ is ST-collapsed,
every SBR consists of exactly one taxon, so we can unambiguously identify each SBR by its corresponding taxon. Let $R(N) \subseteq \cX$ be
the set of SBRs of $N$. Clearly, if $R(N)$ contains a terminal we are done (by the same argument as used in the proof of Lemma \ref{lem:SBRexists}).

For $x \in R(N)$, we define the \emph{detach and re-hang above a terminal} (DRHT) operation as follows. We first delete
$x$ and tidy up the resulting network in the usual fashion, which causes the parent reticulation of $x$ and its reticulation edges to disappear. This creates a network $N'$ on $\cX \setminus \{x\}$ that represents $\cC \setminus \{x\}$, where
$r(N')=r(N)-1$ (by Lemma \ref{lem:atmostone}).\footnote{Note that Lemma \ref{lem:atmostone} also prevents that, prior to the tidying-up phase, a multi-edge is
created, because then both the parent reticulation of $x$ \emph{and} the reticulation at the end of the multi-edge would disappear in the tidying-up phase, meaning
that $r(N') \leq r(N)-2$.} Let $\delta$ be the root of $N'$ and $p$ be the parent of $M(x)$ in $N'$. We construct a new network
$N''$ from $N'$ by deleting the edge $(p,M(x))$, introducing new nodes $p'$, $r$, $r'$, adding the edges
$(p,p'), (p',M(x)), (p',r), (\delta,r), (r,r')$ and finally labelling $r'$ with $x$. This potentially raises the outdegree of the
root above 2 (i.e. makes the network non-binary) but this could easily be addressed by replacing the root with a chain
of nodes of indegree at most 1 and outdegree 2 (see e.g. \cite{twotrees}); the exposition is easier to follow if we permit a high-degree root.
As observed in \cite{elusiveness} $N''$ represents $\cC$. To summarise the argument from \cite{elusiveness}, let $C \in \cC$ be a non-singleton cluster such that $x \not \in C$. Clearly $C$ is represented by $N'$. The edge in $N'$ that represents $C$ will still represent $C$ in $N''$ if we switch
the reticulation edge $(p',r)$  \emph{off} and the reticulation edge $(\delta,r)$ \emph{on}. So suppose $C$ is a non-singleton
cluster that does contain $x$. In this case $M(x) \in C \setminus \{x\}$. So the edge in $N'$ that represents
$C \setminus \{x\}$ can represent $C$ in $N''$ by switching the reticulation edge $(p',r)$  \emph{on} and the reticulation edge $(\delta,r)$ \emph{off}.
Hence $N''$ represents all clusters $C \in \cC$ and $r(N'')=r(N)$.
Note that, in general, $R(N'') \neq R(N)$.

We will repeatedly apply the DRHT operation to transform $N$ into a network with a ``canonical'' form. Specifically, choose an arbitrary $x \in R(N)$ and let
$R^{0} = \{x\}$. Let $N^{0}=N$ and let $N^{1}$ be the network obtained by applying the DRHT operation to $x$ in $N$. We apply the following
procedure, starting with $i=0$:\\
\\
(1) If $R(N^{i+1})$ contains a terminal then stop.\\
(2) If $R(N^{i+1}) \setminus R^{i} = \emptyset$ then stop.\\
(3) Otherwise, let $y$ be an arbitrary taxon in $R(N^{i+1}) \setminus R^{i}$, let $R^{i+1} = R^{i} \cup \{y\}$ and let $N^{i+2}$ be obtained by applying
the DRHT operation to $y$ in $N^{i+1}$. Increment $i$ and go back to (1).\\
\\
The procedure will definitely stop because with each new iteration $|R^{i+1}| > |R^{i}|$. Let $N^{*}$ be the final network obtained.
Note that if the procedure stops at line (1) the proof
is complete: $N^{*}$ is a network that represents $\cC$ with $r(\cC)$ reticulations that has a terminal as a SBR. So we can assume
that the procedure stops at line (2). Clearly, there are no terminals in $R(N^{*})$. Furthermore, for each $x \in R(N^{*})$ we
know that at some iteration a DRHT operation was performed on $x$, because otherwise the procedure could continue for at least one more iteration. In the
iteration when this happens, the tail of one reticulation edge (of the parent reticulation of $x$)
is attached to the root, and the other ``just above'' $M(x)$ i.e. at $p'$. In subsequent iterations $M(x)$ will never undergo
a DRHT operation, because it is a terminal. Also, $x$ is not a terminal (and thus is not in the range of $M$) so the edge between $x$ and its parent reticulation
will never be subdivided by a DRHT operation. Furthermore, both reticulation edges will remain intact because, after the tidying-up phase, the DRHT operation only subdivides
edges whose head is a taxon. In fact, the only relevant change that can happen is
that the edge $(p',M(x))$ is subdivided by later DRHT operations; specifically, DRHT operations applied to some non-terminal $y \neq x$ such that $M(x)=M(y)$.
Whether or not this happens, it follows that in $N^{*}$ for every $x \in R(N^{*})$ one parent of the (parent reticulation of) $x$ is the root,
and the other parent is a node $t(x)$ such that a directed tree-path (i.e. a path that contains no reticulations) exists from $t(x)$ to $M(x)$.\\
\\
We will continue to focus on $N^{*}$. We say that a directed path is a \emph{root-reticulation} path if it starts at the root and terminates
at a reticulation. The \emph{reticulation length} of such a path is the number of reticulations in it (including the end node). Observe that the last
node $r$ on a root-reticulation path of maximum length must be (the parent reticulation of) an SBR. If this was not so then a previously unvisited reticulation $r^{*} \neq r$ is
reachable by a directed path from $r$, thus contradicting the assumption that the root-reticulation path had maximum reticulation length.

Consider an arbitrary $x' \in R(N^{*})$ corresponding to the SBR at the end of a root-reticulation path of maximum reticulation length. Let $x$ be the taxon in
$R(N^{*})$ such that $M(x)=M(x')$ and, amongst all such taxa, \emph{most recently} underwent a DRHT operation; it might be that $x=x'$. Note that, by construction, $x$ is
also at the end of a root-reticulation path of maximum reticulation length. Furthermore, one parent of (the parent reticulation of) $x$ is the root, and the other is a node $t(x)$  such that $(t(x), M(x))$ is an edge in $N^{*}$.  We are now ready for the core argument in the proof. We walk backwards from $t(x)$ towards the root until we encounter a vertex $u$ for
which one of the following three mutually exclusive cases holds: (a) $u$ is a reticulation; (b)
$u$ is a tree-node (i.e. a node that is not a reticulation) from which some taxon $y \in \cX \setminus M(x)$ can be reached by a directed tree-path; (c) $u$ is the root and (b) does not hold.\\ 
\\
Before commencing with the case analysis we argue that $N^{*}$ has a specific form.
Let $t$ be an intermediate node on the directed path from $u$ to $t(x)$. We know
$t$ is a tree-node from which no taxon (other than $M(x)$) can be reached by a directed tree-path. So all maximal directed tree-paths starting at $t$ that
do not terminate at $M(x)$, must terminate at the parent of a reticulation $r$. But then there exists a root-reticulation path of maximum reticulation length terminating
at $r$, so $r$ is actually the parent of an SBR. Let $x'$ be the taxon corresponding to this SBR. We know (by construction of $N^{*}$) that the non-root parent of $r$ can
reach $M(x')$ by a directed tree-path. If $M(x') \neq M(x)$ then (b) would actually have held for node $t$, because merging the two directed tree-paths would
create a directed tree-path from $t$ to $M(x')$, and the backwards walk would have terminated
earlier. So $M(x)=M(x')$ and hence $(t,r)$ is an edge in $N^{*}$. From this we can conclude that, for each intermediate node $t$, the child of $t$ that does
not lie on the path from $u$ to $t(x)$, is (the parent reticulation of) an SBR, and moreover all such SBRs map to $M(x)$. We now commence the case analysis.\\
\\
\textbf{Case (a)}. In this case $M(x)$ is the only taxon reachable
by directed tree-paths from the child of reticulation $u$. This is depicted in Figure \ref{fig:hitret}. Consider the network $N^{**}$ obtained from $N^{*}$ by deleting
$M(x)$ and suppressing its parent; in particular, consider how this changes Figure \ref{fig:hitret}. Clearly, $N^{**}$ represents $\cC \setminus \{M(x)\}$. Now, there
is some ST-set tree sequence of $\cC \setminus \{M(x)\}$ that begins $( \{x\}, \{x'\}, \ldots, \{x''\}$, because singletons are always ST-sets. Each time one of these
ST-sets is removed from $N^{**}$ the reticulation number of $N^{**}$ drops by exactly one, \emph{except} at the point when $\{x''\}$ is removed, because
at this point the reticulation $u$ will also disappear, causing the reticulation number of the network to drop by two. Hence we can conclude that
$\cC \setminus \{M(x)\}$ has an ST-set tree sequence of length $r(N^{*})-1 = r(\cC)-1$, and we are done.\\ 
\\
\textbf{Case (b)}. Let $y \neq M(x)$ be the taxon that can be reached by a directed tree-path from $u$. This is depicted in Figure \ref{fig:hittaxon}.
$M(x)$ is a terminal, so there must exist some non-singleton cluster $C \in \cC$ such that $M(x)$ is in $C$, but $y$ is \emph{not} in $C$. Critically, the only edges that can represent
such a cluster lie on the path from $u$ to $t(x)$.  Hence there exists $R' \subseteq R(N^{*})$ such that $C = R' \cup M(x)$ and for all $x' \in R'$, $M(x')=M(x)$. So for each $x' \in R'$ we know that
$x' \contains M(x)$. Hence in both $T_1$ and $T_2$ the parent of $M(x)$ must be reachable by a directed path
from the parent of $x'$ (possibly of length 0). Suppose there exist $y', z' \in \cX$ such that $y'  \in W^{T_1}( M(x) )$, $z' \in W^{T_2}( M(x) )$
and neither $y'$ nor $z'$ is in $R'$. But then every non-singleton cluster in $\cC$ that contains $M(x)$, contains either $y'$ or $z'$. Hence
$C \not \in \cC$, which is obviously not possible. So there must be some element $x''$ of $R'$ that appears in (wlog) $W^{T_1}(M(x))$. But there must
also be a directed path from the parent of $x''$ in $T_1$ to the parent of $M(x)$ in $T_1$. So $x''$ must be a sibling
of $M(x)$, i.e. $\{ x'', M(x) \} \in \cC$. Furthermore, $x''$ is an SBR, and $M(x)$ is a terminal, so we are done.\\
\\
\textbf{Case (c)}. In this case the network must look like Figure \ref{fig:hitroot}, because the maximum reticulation length of a root-reticulation path is 1.
$M(x)$ is in at least one non-singleton cluster $C$ (otherwise it would not be a terminal) so there again exists  $R' \subseteq R(N^{*})$ such that $C = R' \cup M(x)$ and for all $x' \in R'$, $M(x')=M(x)$. (In this case $R(N^{*}) = \cX \setminus \{M(x)\}$).
The rest of the analysis is the same as case (b).
\end{proof}
\begin{figure}
\begin{center}
\includegraphics[scale=0.2]{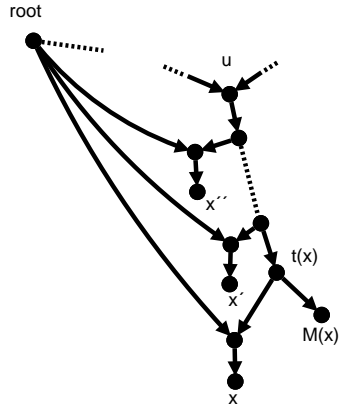}
\end{center}
\caption{This is case (a) in the proof of Theorem \ref{thm:6rA}. Here $u$ is a reticulation and the only taxon reachable by
a directed tree-path from the child of $u$ is $M(x)$. Each intermediate node on the path from $u$ to $M(x)$ is the tail of a reticulation edge that feeds
into (the parent reticulation of) an SBR; the other reticulation edge is attached to the root. All these SBRs $x$, $x'$,...,$x''$ are such that $M(x)=M(x')=...=M(x'')$.} 
\label{fig:hitret}
\end{figure}

\begin{figure}
\begin{center}
\includegraphics[scale=0.2]{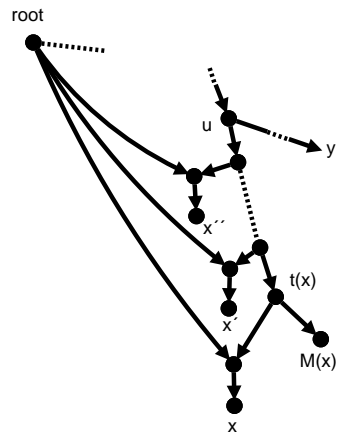}
\end{center}
\caption{This is case (b) in the proof of Theorem \ref{thm:6rA}. Here $u$ is a tree-node and there is a taxon $y \neq M(x)$ reachable by
a directed tree-path from $u$. Each intermediate node on the path from $u$ to $M(x)$ is the tail of a reticulation edge that feeds
into (the parent reticulation of) an SBR; the other reticulation edge is attached to the root. All these SBRs $x$, $x'$,...,$x''$ are such that $M(x)=M(x')=...=M(x'')$.} 
\label{fig:hittaxon}
\end{figure}

\begin{figure}
\begin{center}
\includegraphics[scale=0.2]{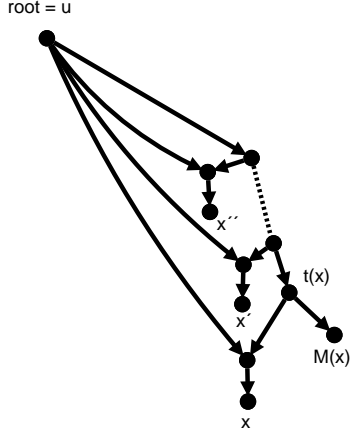}
\end{center}
\caption{This is case (c) in the proof of Theorem \ref{thm:6rA}. Here $u$ is the root and $\cX = \{x,x',\ldots,x''\} \cup \{M(x)\}$. Each intermediate node on the path from $u$ to $M(x)$ is the tail of a reticulation edge that feeds
into (the parent reticulation of) an SBR; the other reticulation edge is attached to the root. All these SBRs $x$, $x'$,...,$x''$ are such that $M(x)=M(x')=...=M(x'')$.} 
\label{fig:hitroot}
\end{figure}

\begin{lemma}
\label{lem:3r}
Let $\cC$ be an ST-collapsed set of clusters on $\cX$ such that $r(\cC) \geq 1$. Then 
$\cC$ contains at most $3 \cdot  r(\cC)$ terminals.
\end{lemma}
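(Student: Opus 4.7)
I would prove this by induction on $r(\cC)$, using Theorem \ref{thm:6rA} as the key inductive tool.

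For the base case $r(\cC)=1$, I would analyze the structure of a minimum binary network $N$ with exactly one reticulation representing $\cC$. Since $\cC$ is ST-collapsed, the unique SBR of $N$ consists of a single taxon (as in Lemma \ref{lem:SBRexists}), and the geometry of a one-reticulation network is tightly constrained. A direct case analysis of which pairs of incompatible clusters can meet at a single taxon (using the characterization of terminals via Observation \ref{obs:termequiv}(2)) should yield at most three terminals.

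For the inductive step with $r(\cC)=r\ge 2$, I would apply Theorem \ref{thm:6rA} to obtain a taxon $x$ such that $r(\cC\setminus\{x\})=r-1$, together with a distinguished ``neighbourhood'' $S\subseteq\cX$ of size at most three: either the cherry triple $\{a,b,c\}$ in case (1), or the pair consisting of $x$ and the terminal it sits adjacent to in case (2). Applying the inductive hypothesis to $\cC\setminus\{x\}$ (passing to its ST-collapsed version if the deletion creates new non-trivial ST-sets) yields at most $3(r-1)$ terminals of the reduced cluster set. To close, I would argue that every terminal of $\cC$ lying outside $S$ remains a terminal in $\cC\setminus\{x\}$; combined with $|S|\le 3$, this gives at most $3(r-1)+3=3r$ terminals of $\cC$, as required.

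The main obstacle is the bookkeeping under deletion of $x$: the partial order $\contains$ can shift in subtle ways when clusters containing $x$ disappear, potentially promoting some non-terminal $y$ of $\cC$ to terminal status in $\cC\setminus\{x\}$, or conversely demoting a terminal. I would control this using Observation \ref{obs:containswitness}, which characterizes $\contains$ in terms of the witness sets $W^{T_1}$ and $W^{T_2}$: for $y,z\neq x$ the relation $y\contains z$ depends only on $W^{T_1}(y)\cap W^{T_2}(y)$, and the only way this can be affected by removing $x$ is if $z=x$. This should restrict how the terminal set can change and, together with the size-3 structure identified by Theorem \ref{thm:6rA}, absorb all the ``new'' terminals into the bounded set $S$. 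Verifying that ST-collapsedness is preserved (or gracefully recovered) after the deletion is the accompanying technical check.
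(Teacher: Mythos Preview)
Your route is quite different from the paper's. The paper never invokes Theorem~\ref{thm:6rA}; instead it works entirely at the level of $r$-reticulation \emph{generators}. It defines $t(G) = |R(G)| + |I(G)|$, where $R(G)$ is the set of node sides and $I(G)$ is a maximum ``independent'' set of edge sides (no directed tree-path from the head of one to the tail of another), observes that $t(G)$ upper-bounds the number of terminals that can be hung on $G$, and then proves $t(G) \leq 3r$ by induction on $r$. The inductive step deletes a node side and, through a four-case analysis of the local structure around that side, shows the resulting $(r{-}1)$-generator $G'$ satisfies $t(G') \geq t(G) - 3$. This argument is purely about generators and applies to arbitrary ST-collapsed cluster sets.

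That last point is where your proposal has a genuine gap: Lemma~\ref{lem:3r} is stated for an \emph{arbitrary} ST-collapsed $\cC$, whereas Theorem~\ref{thm:6rA} and Observations~\ref{obs:termequiv}--\ref{obs:containswitness} all require $\cC = Cl(T_1) \cup Cl(T_2)$; your argument would at best establish the two-tree special case. There is also a flaw in your control of how terminals behave under deletion. You assert that for $y,z \neq x$ the relation $y \contains z$ can be affected by removing $x$ only when $z = x$, but this is false: if $x$ is the sibling of $y$ in (say) $T_1$, then $W^{T_1}(y) = \{x\}$, and after deleting $x$ the witness set $W^{T_1'}(y)$ jumps to the much larger set below $y$'s new sibling, which may now meet $W^{T_2'}(y)$. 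Thus a terminal $y$ can be lost whenever $\{x,y\} \in \cC$, not only when $y$ lies in your $S$. In your case~(1) with $x = a$, for instance, $a$ may have a sibling $d \notin \{a,b,c\}$ in the other tree, and a terminal $d$ can be demoted without being accounted for. This particular defect is repairable (take $S = \{x\} \cup \{y : \{x,y\} \in \cC\}$, which still has size at most three), but the generality issue is not, and your witness-set claim as written is incorrect.
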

\begin{proof}
Given that $\cC$ is ST-collapsed, we know that there exists
a binary network $N$ with $r(N)=r(\cC)$ that represents $\cC$ such that $N$
can be obtained by performing the leaf-hanging operation to some $r(\cC)$-reticulation generator \cite{fptclusters}. The 1-reticulation and 2-reticulation generators are shown in Figure \ref{fig:rgen}.
Recall that the edges of a generator are called \emph{edge sides} and that the
nodes of a generator with indegree-2 and outdegree-0 are called \emph{node sides}. 
For a generator $G$ we let $I(G)$ be any maximum size subset of edge sides such that,
for every two sides $s \neq s'$ in $I(G)$, there is no directed path from the
head of $s$ to the tail of $s'$ such that all nodes on the path (including the head of $s$ and the tail of $s'$) are tree-nodes. We let $R(G)$ be the set of node sides of $G$, and we define $t(G)$ as $|R(G)| + |I(G)|$. We
define $t(r)$ (where $r \geq 1$) as the maximum value of $t(G)$ ranging over
all $r$-reticulation generators $G$. Observe that, if $r(\cC)=r$, then $t(r)$ is an upper
bound on the number of terminals in $\cC$. This follows because there can be at most one
terminal per edge side and, more generally,  it is not possible to place two terminals $x \neq y$ on the sides of the generator such that a directed tree-path exists from the parent of $x$ to the parent of $y$, because then $x {\contains} y$. To prove the lemma we will show that $t(r) \leq 3r$
for $r \geq 1$.

We will prove this by induction. The base case $r=1$ is straightforward. There
can be at most three terminals placed on the 1-reticulation generator: one on the node side
and one on each of the two edges whose head is the node side, see Figure \ref{fig:rgen}. (The cluster set
$\cC = \{\{a,b\}, \{b,c\}, \{d,c\}, \{e,d\}\}$, for which $r(\cC)=1$, shows that three terminals is actually possible).

% there exists an ST-collapsed cluster set $\cC$ where $r(\cC)=r$ and the number of terminals in
%$\cC$ is strictly greater than $3r$. Let $N$ be any network with $r(N)=r(\cC)$ that represents
%$\cC$ and let $G$ be the $r$-reticulation generator underpinning $N$.

Observe that, by acyclicity, every generator has at least
one node side. Furthermore, if we (1) delete a node side $s$ from an $r$-reticulation generator, (2) delete all leaves that are created (nodes with indegree-1 and outdegree-0) and (3) suppress all nodes with indegree and outdegree both equal to 1, we obtain an $(r-1)$-reticulation generator. For example, observe how applying steps (1)-(3) to any
2-reticulation generator creates the unique 1-reticulation generator. 

Now, for the sake of contradiction assume that $r > 1$ is the smallest value such that $t(r) > 3r$. Let $G$ be an $r$-reticulation generator such that
$t(G) \geq 3r+1$. Let $I(G)$ be a subset of the edge sides, as defined above, such that $t(G) = |R(G)| + |I(G)|$. Locate an arbitrary node side $s$ of $G$. We will
show that deleting $s$ and applying steps (1)-(3) as described above will create an $(r-1)$-generator $G'$ such that $t(G') \geq t(G) - 3$, yielding
a contradiction on the assumed minimality of $r$.  There are several cases, conditional on the positions of the tails of the two edges that enter $s$.\\
\\
In \textbf{case (i)} the two tails
are distinct and both have indegree-2 and outdegree-1. In this case, by the maximality of $I(G)$, both the edges entering $s$ will be in $I(G)$.
Hence, $|I(G')| = |I(G)|-2$ and $|R(G')| = |R(G)|+1$, so $t(G')=t(G)-1$.\\
\\
In \textbf{case (ii)} the
two tails are distinct and both have indegree-1 and outdegree-2. Clearly $|R(G')| = |R(G)|-1$. Let $u$ be the tail of the first edge that enters $s$. Let $p(u)$ 
be the parent of $u$ and $c(u)$ the child of $u$ not equal to $s$. The critical observation is that at most one of $(u,c(u))$ and $(p(u),u)$ is in $I(G)$. So
deleting $s$ will delete the edge $(u,s)$ from $I(G)$, if it is there, and if one of $(u,c(u))$ and $(p(u),u)$ is in $I(G)$ then this can be deleted and
replaced by the new edge $(p(u),c(u))$. The
same analysis holds for the second edge $(v,s)$ entering $s$. Hence $|I(G')| \geq |I(G)| - 2$, and this completes this case. Note that the analysis still holds
if (wlog) $p(v)=u$, because then at most one of the three edges $(p(u),u), (u,v), (v,c(v))$ will be in $I(G)$, and if this occurs this edge can be replaced
by the new edge $(p(u),c(v))$.\\
\\
In \textbf{case (iii)} both tails are distinct, one tail has indegree-2 and outdegree-1 and the other tail has indegree-1 and outdegree-2.
Then, by combining the insights from the first two cases, $|R(G')| = |R(G)|$ and $|I(G')| \geq |I(G)| - 2$.\\
\\
Finally, in \textbf{case (iv)} the two tails are the same vertex $u$ i.e. $s$ is the head
of a multi-edge. Note that at most two of the three edges $(p(u),u), (u,s), (u,s)$ can be in $I(G)$.
Now, suppose $p(u)$ has indegree-2 and outdegree-1. Then $|R(G')| = |R(G)|$ and $|I(G')| \geq |I(G)| - 2$. 
In the case that $p(u)$ has indegree-1 and outdegree-2 we have that $|R(G')|=|R(G)|-1$, and let $p'$ be the parent of $p(u)$ and $c'$ be
the child of $p(u)$ not equal to $u$. Again, at most one of the two edges $(p', p(u))$ and $(p(u),c')$ will be in $I(G)$, and
if necessary this can be replaced by the new edge $(p',c')$. So $|I(G')| \geq |I(G)|-2$.
\end{proof}

\begin{figure}
\begin{center}
\includegraphics[scale=0.2]{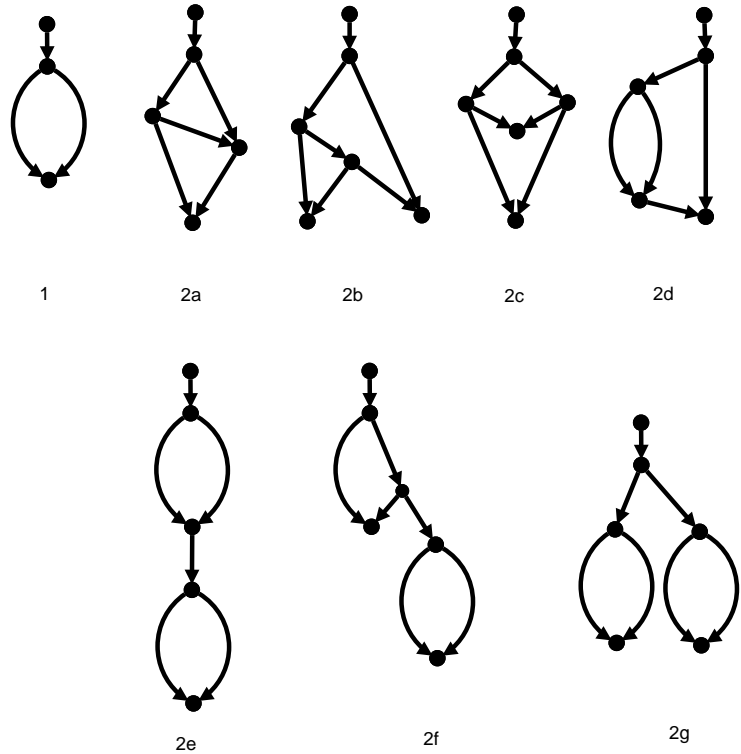}
\caption{The single 1-reticulation generator and the seven 2-reticulation generators.}
\label{fig:rgen}
\end{center}
\end{figure}

\begin{lemma}
\label{lem:6rB}
Let $\cC = Cl(\cT)$ be a set of clusters on $\cX$, where $\cT = \{T_1,T_2\}$ is a set of two binary trees on $\cX$ with no
non-trivial common subtrees, and $r(\cC) \geq 1$. Then there exists $\cX' \subseteq \cX$ such that (1) $|\cX'|  \leq 6 \cdot r(\cC)$
and (2) there exists $x \in \cX'$ such that $r(\cC \setminus \{x\}) = r(\cC) - 1$. Furthermore such a set $\cX'$ can be computed
in polynomial time.
\end{lemma}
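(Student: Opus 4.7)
The plan is to combine Theorem~\ref{thm:6rA} with Lemma~\ref{lem:cherries} and Lemma~\ref{lem:3r}, producing two candidate constructions of $\cX'$ depending on which branch of Theorem~\ref{thm:6rA} applies. First I will test, in polynomial time, whether condition~(1) of Theorem~\ref{thm:6rA} holds: since $|\cC|\leq 4(n-1)$, I can enumerate all size-2 clusters and check in $O(n^2)$ time whether any two of them share a taxon. If such a triple $(a,b,c)$ is found, I set $\cX'=\{a,b,c\}$. Then $|\cX'|=3\leq 6\cdot r(\cC)$ (using $r(\cC)\geq 1$), and Lemma~\ref{lem:cherries} directly exhibits an $x\in\{a,b,c\}$ with $r(\cC\setminus\{x\})=r(\cC)-1$.

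If no such triple exists, then no taxon lies in two distinct size-2 clusters of $\cC$, i.e.\ the size-2 clusters of $\cC$ form a matching on $\cX$. In this case condition~(2) of Theorem~\ref{thm:6rA} must hold, so I fall back on it and take
\[
\cX' \;=\; \{\, t \in \cX : t \text{ is a terminal of } \cC \,\} \;\cup\; \{\, y \in \cX : \{y,t\}\in \cC \text{ for some terminal } t \text{ of } \cC \,\}.
\]
By condition~(2) the witnessing taxon $x$ lies in $\cX'$ (either as a terminal itself, or as a size-2 cluster partner of a terminal). Because the size-2 clusters form a matching, each terminal contributes at most one partner to the second set, so $|\cX'|$ is at most twice the number of terminals of $\cC$, and by Lemma~\ref{lem:3r} this is at most $2\cdot 3 r(\cC)=6 r(\cC)$.

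Both constructions clearly run in polynomial time. The terminals of $\cC$ can be identified by directly evaluating the relation $\contains$ on each ordered pair $(x,x')\in\cX\times\cX$: for each pair one scans every non-singleton cluster containing $x$ to see whether it also contains $x'$, which is polynomial in $n$. Once the terminals are known, listing each terminal's (at most one) size-2 cluster partner is immediate.

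The hard part was really already done in Theorem~\ref{thm:6rA} and Lemma~\ref{lem:3r}; the only new observation here is that the failure of condition~(1) is equivalent to the size-2 clusters forming a matching, which is precisely what keeps the ``terminal partner'' count at one per terminal and yields the tight $6 r(\cC)$ bound rather than the larger bound one would get from allowing each terminal to share size-2 clusters with arbitrarily many other taxa.
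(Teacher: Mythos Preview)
Your proof is correct and follows essentially the same approach as the paper: branch on whether condition~(1) of Theorem~\ref{thm:6rA} holds, use $\{a,b,c\}$ together with Lemma~\ref{lem:cherries} in one branch, and in the other branch take terminals plus their size-2 partners and bound via Lemma~\ref{lem:3r}. The only cosmetic differences are that you phrase the key counting observation as ``the size-2 clusters form a matching'' (the paper states the weaker but sufficient fact that each terminal lies in at most one size-2 cluster), and you identify terminals by directly evaluating $\contains$ rather than via the witness-set characterisation of Observation~\ref{obs:termequiv}.
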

\begin{proof}
If situation (1) from the statement of Theorem \ref{thm:6rA} holds then we can simply take $\cX' = \{a,b,c\}$, so $|\cX'|=3 \leq 6 \cdot r(\cC)$. Otherwise we are
in situation (2) and we take $\cX'$ to be the
union of all terminals of $\cC$ plus all taxa that appear in some size-2 cluster of $\cC$ with some terminal. From Lemma \ref{lem:3r} we know that
there are at most $3 \cdot r(\cC)$ terminals in $\cC$. Observe that each such terminal can be in at most one size-2 cluster, because otherwise
there would exist two incompatible size-2 clusters i.e. situation (1) of Theorem \ref{thm:6rA} would hold. Hence there can be at most
$3 \cdot r(\cC)$ non-terminals that appear in size-2 clusters with terminals, from which $|\cX'| \leq 6 \cdot r(\cC)$ follows.
Given the characterisation described in Observation \ref{obs:termequiv}, and the fact that $|\cC| \leq 4(n-1)$, it is easy to see that the set $\cX'$ can be computed in (low-order) polynomial time.
\end{proof}

\begin{theorem}
\label{thm:6rC}
Let $\cC = Cl(\cT)$ be a set of clusters on $\cX$, where $\cT = \{T_1,T_2\}$ is a set of two binary trees on $\cX$. Then $r(\cC) = h(\cT)$ can be computed
in time $(6r)^r \cdot poly(n)$ time, where $r = r(\cC)$ and $n = |\cX|$.
\end{theorem}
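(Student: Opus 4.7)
The plan is to compute $r(\cC)$ (which by the discussion in Section~1 equals $h(\cT)$) via a bounded-search-tree algorithm driven by Lemma~\ref{lem:6rB}. I would phrase it as a decision procedure $A(\cC,k)$ that returns ``yes'' iff $r(\cC)\leq k$; the exact value $r(\cC)$ is then recovered by running $A$ on $k=0,1,2,\ldots$ until the first ``yes'', which costs only a constant factor more than the final successful call.

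The subroutine $A(\cC,k)$ would proceed as follows. First, collapse any non-trivial ST-sets of $\cC$ (equivalently, non-trivial common subtrees of $T_1$ and $T_2$); this is a polynomial-time step that does not change $r(\cC)$ and re-establishes the ST-collapsed hypothesis required by Lemma~\ref{lem:6rB}. If the resulting $\cC$ is compatible (detectable in polynomial time), return ``yes''. Otherwise, if $k=0$ return ``no''. In the remaining case, compute the set $\cX'$ of Lemma~\ref{lem:6rB}; if $|\cX'|>6k$ then $r(\cC)>k$, so return ``no''. Else, for each $x\in\cX'$ recurse on $A(\cC\setminus\{x\},k-1)$, returning ``yes'' as soon as any branch succeeds and ``no'' otherwise.

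Correctness is by induction on $k$. If $r(\cC)\leq k$ then either $r(\cC)=0$ (caught by the compatibility test) or $r(\cC)\geq 1$, in which case Lemma~\ref{lem:6rB} supplies some $x\in\cX'$ with $r(\cC\setminus\{x\})=r(\cC)-1\leq k-1$, so the recursive call succeeds. Conversely, if $r(\cC)>k$ then Lemma~\ref{lem:atmostone} yields $r(\cC\setminus\{x\})\geq r(\cC)-1>k-1$ for every $x$, so every recursive call returns ``no''.

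For the running time, let $T(k)$ bound the worst case of $A(\cdot,k)$. Preprocessing, compatibility testing, and computing $\cX'$ are all polynomial in $n$, and we branch into at most $6k$ subproblems each with parameter $k-1$, giving $T(k)\leq 6k\cdot T(k-1)+poly(n)$ with $T(0)=poly(n)$. Unrolling yields $T(k)\leq 6^k\cdot k!\cdot poly(n)\leq (6k)^k\cdot poly(n)$, and since we only ever call $A$ with $k\leq r$, the total running time is $(6r)^r\cdot poly(n)$. The main obstacle I anticipate is the bookkeeping needed to maintain the hypotheses of Lemma~\ref{lem:6rB} across recursive calls: one must verify that after deleting $x$ from $\cC$ the resulting clusters are still of the form $Cl(T_1')\cup Cl(T_2')$ for two binary trees on $\cX\setminus\{x\}$, and that re-collapsing any newly created non-trivial common subtrees preserves both $r(\cC\setminus\{x\})$ and the applicability of Lemma~\ref{lem:6rB} inside the recursion.
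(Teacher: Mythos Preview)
Your proposal is correct and essentially matches the paper's proof: both collapse common subtrees, invoke Lemma~\ref{lem:6rB} to obtain a branching set of size at most $6r$, and explore the resulting bounded search tree of depth $r$. The only cosmetic difference is that the paper phrases the search as a breadth-first traversal while you use an iterated decision procedure with an explicit $|\cX'|>6k$ cutoff; the bookkeeping you flag is immediate, since deleting $x$ from each $T_i$ (and suppressing its former parent) yields binary trees on $\cX\setminus\{x\}$ whose cluster union is exactly $\cC\setminus\{x\}$, so Lemma~\ref{lem:6rB} remains applicable after re-collapsing.
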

\begin{proof}
The algorithm is simple. We repeat the following steps until a compatible cluster set (i.e. a set of clusters that can be represented by a tree) is created:
(1) collapse any common subtrees into single taxa (adjusting $\cC$ as necessary), this can be done in (low-order) polynomial time, (2) construct the set $\cX'$ described in Lemma \ref{lem:6rB} and then (3) ``guess'' an element $x \in \cX'$ such that $r( \cC \setminus \{x\} ) = r( \cC ) - 1$. 

At each iteration the guessing can be simulated simply by trying all (at most) $6r$ elements
in $\cX'$. (Note that the $\cX'$ sets that arise will never have more than $6r$ elements because removing taxa from a cluster set cannot raise the reticulation
number of the cluster set). If we traverse this search tree in breadth-first fashion and stop as soon as we have created a compatible cluster set then the
depth of the search tree will equal $r = r(\cC)$, requiring at most $\Theta( (6r)^r )$ guesses in total.
\end{proof}

\section{Conclusion}

We have presented a new fixed parameter tractable algorithm for computing the hybridization number of two binary phylogenetic trees. The algorithm is unusual in the sense that it attacks the problem indirectly: it works within the softwired clusters model (which does not require the full topology of the input trees to be preserved) and links the optima together using the unification results in \cite{twotrees,elusiveness}. We hope that this will stimulate new insights into the underlying combinatorial structure
of the hybridization number problem.

\bibliography{6r}

\end{document}